\documentclass[conference]{IEEEtran}
\IEEEoverridecommandlockouts
\usepackage{cite}
\usepackage{amsmath,amssymb,amsfonts}
\usepackage{algorithm}
\usepackage{algpseudocode}
\usepackage{algpseudocodex}
\usepackage{enumitem}
\usepackage{graphicx}
\usepackage{textcomp}
\usepackage{svg}
\usepackage{xcolor}
\usepackage{amsthm} 
\usepackage{soul,color}
\usepackage{url}
\usepackage{diagbox}

\def\BibTeX{{\rm B\kern-.05em{\sc i\kern-.025em b}\kern-.08em
    T\kern-.1667em\lower.7ex\hbox{E}\kern-.125emX}}
\usepackage[font=footnotesize]{subfig}
\def\schemeAO{\textsc{AO}$^2$}
\def\schemeCCT{CCT}
\def\schemeHROA{HROA}

\def\energyImprovementCCT{66.99}
\def\energyImprovementHROA{76.03}

\def\lifeImprovementCCT{66.64}
\def\lifeImprovementHROA{76.85}



\begin{document}

\title{Renewables Power the Orbit? Achieving Sustainable Space Edge Computing via QoS-Aware Offloading}

\author{
\IEEEauthorblockN{
Xiaoyi Fan\IEEEauthorrefmark{1}\thanks{X. Fan and Y. C. Chou contributed equally to this research.}\thanks{Corresponding author: Ershun Du (duershun@tsinghua.edu.cn).},
Yi Ching Chou\IEEEauthorrefmark{2},
Hao Fang\IEEEauthorrefmark{2},
Long Chen\IEEEauthorrefmark{2},
Haoyuan Zhao\IEEEauthorrefmark{2},\\
Ershun Du\IEEEauthorrefmark{1},
Chongqing Kang\IEEEauthorrefmark{1},
Zhe Chen\IEEEauthorrefmark{3},
Jiangchuan Liu\IEEEauthorrefmark{2}
}

\IEEEauthorblockA{\IEEEauthorrefmark{1}Department of Electrical Engineering, Tsinghua University, China}

\IEEEauthorblockA{\IEEEauthorrefmark{2}School of Computing Science, Simon Fraser University, Canada}

\IEEEauthorblockA{\IEEEauthorrefmark{3}Institute of Space Internet, Fudan University, China}

Emails: fan-xy25@mails.tsinghua.edu.cn, ycchou@sfu.ca, fanghaof@sfu.ca, longchen.cs@ieee.org, \\
hza127@sfu.ca, duershun@tsinghua.edu.cn, cqkang@tsinghua.edu.cn, zhechen@fudan.edu.cn, jcliu@sfu.ca
}

\maketitle

\newtheorem{theorem}{Theorem}
\newtheorem{lemma}{Lemma}
\newtheorem{definition}{Definition}

\begin{abstract}

Low-Earth-Orbit (LEO) satellite constellations are becoming integral to 6G infrastructure, but increasing in-orbit computation accelerates battery degradation and raises sustainability concerns. Meanwhile, renewable-heavy regions worldwide experience persistent energy curtailment due to transmission bottlenecks, leaving substantial clean energy stranded near generation sites. We identify a satellite-grid co-design opportunity: adaptively offloading task-critical data from satellite to data centers co-located with renewable power plants. However, realizing this vision requires jointly considering intermittent and capacity-limited communication windows, as well as time-varying electricity budgets. In this paper, we propose SQSO, a Sustainable and QoS-aware Satellite Offloading framework that models per-interval task offloading as a constrained optimization over dynamic topology and electricity prices. Under this framework, we design \schemeAO{}, an adaptive offloading orchestration algorithm to solve the formulated optimization problem. Using Starlink-scale simulations and real-world electricity price traces, \schemeAO{} reduces energy consumption by up to \energyImprovementHROA{}\% and battery life consumption by up to \lifeImprovementHROA\% compared to state-of-the-art schemes, while also lowering task delay. This work highlights that sustainable scaling of LEO constellations requires co-design of space networking and renewable energy infrastructure, while our solution promotes renewable-aware task offloading and cross-domain collaboration for space-energy integration in the 6G era.
\end{abstract}

\section{Introduction}

Renewable energy, primarily wind and solar, has become central to global decarbonization, providing scalable electricity with near-zero operational emissions \cite{ipcc2022_ar6wg3_ch6, nrel2021_lca_electricity}. At the same time, networks and data centers already consume substantial and growing electricity, a trend expected to accelerate with 6G’s ultra-dense deployments and pervasive edge intelligence \cite{itu2024_ict_footprint, iea2023_datacentres_networks, itur2023_m2160}. Consequently, the sustainable scaling of future communication systems increasingly depends on the large-scale availability and efficient utilization of renewable energy \cite{ipcc2022_ar6wg3_ch6}.

Yet, despite rapid growth in renewable generation capacity, a substantial fraction of clean energy remains underutilized. In 2024, the California Independent System Operator curtailed 3.4 million megawatt-hour of utility-scale wind and solar generation, a 29\% year-over-year increase, largely because midday renewable output exceeded local demand and transmission capacity \cite{eia2025caiso, caiso2024prodcurt}. This imbalance reflects a spatial decoupling between the generation and demand of renewable energy. More specifically, renewable power plants are increasingly built in remote, resource-rich regions, while expanding substations and long-distance transmission to reach load centers remains slow and capital-intensive \cite{bird2014curtailment, yang2012windchina}. As a result, large volumes of clean energy remain stranded despite rising demand.

In the meantime, a similar sustainability issue arises in Low Earth Orbit (LEO) satellite systems. Growing sensing and edge-computing workloads increase onboard energy consumption and accelerate battery degradation \cite{xing2024cots, li2024phoenix}. Because battery replacement is rarely practical, operators often resort to launching replacement satellites, raising costs and shortening effective mission lifetimes \cite{li2024phoenix}. This coupling between workload, battery stress, and replacement cost makes energy efficiency highly important to sustainable operation \cite{michel2022starlink}.

\begin{figure}[t]
     \centering
     \includegraphics[width=\linewidth]{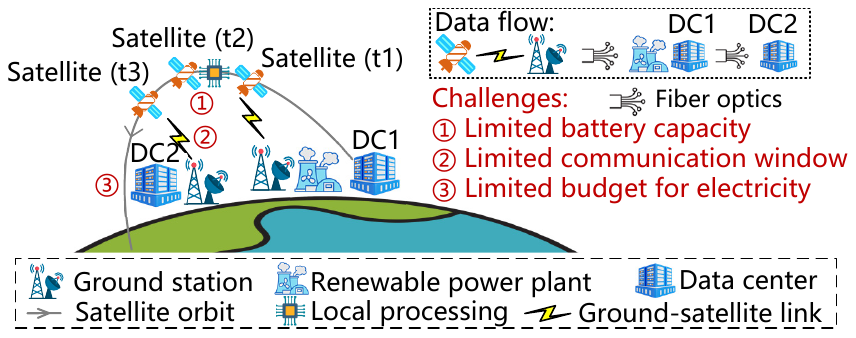}
     \caption{Task-critical data on-demand offloading from satellites to data centers near renewable power plants.}\vspace{-4mm}
     \label{fig:eo_mission_intro_overview}
\end{figure}

A natural opportunity emerges from the way modern infrastructure is built. Large data centers are commonly sited near energy hubs for capacity and cost reasons, and these hubs increasingly include renewable-heavy regions \cite{caict2025computingpower, morrison2023carbonexplorer}. Many such campuses are also co-located with (or built alongside) ground stations to support satellite connectivity. We therefore envision a framework in which satellites offload task-critical data \emph{on demand} to \emph{power-plant-adjacent} data centers for processing, trading scarce onboard energy for opportunistic ground compute powered by otherwise-curtailed renewables. As illustrated in Fig.~\ref{fig:eo_mission_intro_overview}, we trade off the energy consumption of in-orbit processing at \texttt{t2} against the potentially curtailed renewable energy near data center DC1. On-demand offloading of data to DC1 at \texttt{t1}, instead of processing in the orbit at \texttt{t2} and downlinking results at \texttt{t3}, reduces more in-orbit battery consumption while jointly improving the Quality of Service (QoS) of tasks and the satellite sustainability.

 \textbf{Challenge 1: Spatial-temporal renewable mismatch and on-board energy constraint.} Transmission limits cause persistent renewable curtailment and geographically concentrated \emph{stranded} energy, which, paired with limited on-board satellite energy, creates the core challenge of designing optimal offloading strategies to maximize sustainability and end-to-end QoS. \textbf{Challenge 2: Intermittent and capacity-constrained satellite-ground link.} Satellite mobility induces intermittent, capacity-limited contact windows, and contention at the satellite-ground boundary can prevent full offloading of task inputs, causing execution failures even when end-to-end routes exist \cite{vasisht2023umbra}. \textbf{Challenge 3: Volatile excess energy and tight budget constraint.} Available excess energy is highly unstable with sharp fluctuations in electricity prices and renewable output; naive offloading can exceed the fixed electricity budget and erase intended sustainability and QoS gains, requiring a balanced trade-off between performance, benefits and costs under satellite processing and budget constraints.

Prior work addresses pieces of this puzzle but leaves a key gap. On the \emph{terrestrial} side, geo-distributed load shifting and carbon-aware scheduling exploit electricity-price and carbon-intensity heterogeneity across data centers \cite{qureshi2009sigcomm, morrison2023carbonexplorer, hanafy2024gaia}. However, these systems do not explicitly model satellite contact windows, nor the combinatorial coupling between routing and offloading feasibility in satellite-ground fusion architectures. On the \emph{space} side, offloading and traffic-scheduling systems explicitly reason about intermittent connectivity and dynamic network topology \cite{vasisht2021l2d2, vasisht2023umbra, kassing2020hypatia, li2024short}. Yet, they typically treat the ground as a small set of fixed sinks, and do not explicitly incorporate the time-varying electricity prices and renewable energy availability of candidate processing sites into their core scheduling logic. Finally, \emph{emerging satellite edge-computing frameworks} optimize computation placement under power constraints \cite{xing2024cots, li2024phoenix}, but they do not explicitly incorporate renewable curtailment or electricity-price diversity at the terrestrial edge as key optimization considerations.

This paper proposes the Sustainable and QoS-Aware Satellite Offloading (SQSO) Framework, as well as the Adaptive Offloading Orchestration (\schemeAO{}) algorithm under this framework to maximize the gains in terms of sustainability and QoS. Both SQSO and \schemeAO{} leverage power-plant-adjacent data centers as flexible, opportunistic processing nodes to unlock otherwise curtailed renewable energy. 

At a high level, we build a time-sliced graph model of the dynamic satellite network and reduce per-interval decisions to a constrained assignment problem, enabling scalable re-optimization as topology and prices evolve. More spefically,  or each incoming task, our algorithm makes joint decisions on the intermediate ground processing site and the feasible space-to-ground transmission route, with design logic tightly aligned to the aforementioned three challenges: (i) it maximizes a composite utility function integrating end-to-end QoS and sustainability, to address \textbf{Challenge 1}; (ii) it strictly complies with the time-varying capacity limits of satellite-ground contact windows, to meet the hard constraint mentioned in \textbf{Challenge 2}; (iii) it keeps total electricity consumption within the predefined budget, to resolve the core trade-off raised in \textbf{Challenge 3}. We summarize the contributions as follows:

\begin{enumerate}
  \item We conduct a measurement-driven study of renewable curtailment and electricity price volatility at power-plant-adjacent data centers (often co-located with ground stations), revealing a satellite-grid co-design opportunity that improves task QoS and satellite battery lifetime.

  \item We introduce SQSO, a Sustainable and QoS-aware Satellite Offloading framework. We formulated an on-demand joint optimization problem of offloading orchestration under SQSO, with marginal performance gains in QoS and sustainability, satellite contact dynamics, and time-varying energy conditions jointly considered.

  \item We design the Adaptive Offloading Orchestration (\schemeAO{}) algorithm to solve the formulated problem. We implement the SQSO framework that integrates the \schemeAO{} algorithm based on real-world traces. Extensive evaluation shows that \schemeAO{} significantly improves battery life and task QoS compared to state-of-the-art methods.
\end{enumerate}

The remainder of this paper is structured as follows:
Section \ref{sec:motivation} briefly presents our motivations. Section \ref{sec:system_model_and_problem_formulation} introduces the system model and formulate the problem. In Section \ref{sec:algorithm}, we propose the algorithm, followed by the performance evaluation in Section \ref{sec:eval}. We brefily review the related work in Section \ref{sec:related_work}. Finally, we discuss the limitations of our work and conclude this paper in Section \ref{sec:conclusion}.

\section{Background and Motivation}
\label{sec:motivation}

\begin{figure}[t]
     \centering
     \includegraphics[width=\linewidth]{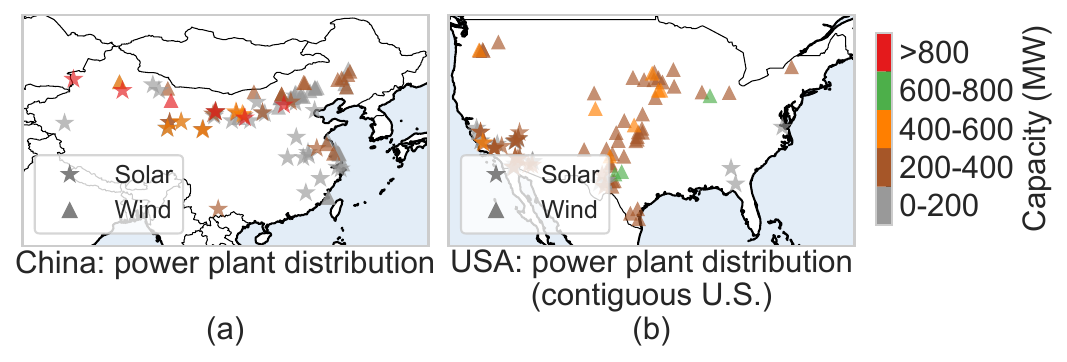}
     \caption{Spatial distribution and capacity of power plants in China and the contiguous United States.}
     \label{fig:power_plant}
     \vspace{-3mm}
\end{figure}

\subsection{Distribution Mismatch Between Renewables and Loads}
Wind and solar resources are inherently geographically skewed. Fig.~\ref{fig:power_plant} illustrates the distribution of power plants in China and the United States. In China, generation facilities are predominantly located in the northern and western regions. At the same time, in the United States, they are concentrated in central inland areas, where renewable resources are most abundant. In contrast, for both China and the United States, major cities and economic centers are largely located in coastal regions, far from these generation sites.

This mismatch extends beyond geography and manifests directly in system operations and utilization outcomes. Renewable-rich remote regions frequently experience higher curtailment rates than industrialized load centers, reflecting limited local demand and constrained transmission capacity for exporting power. Consequently, a significant fraction of renewable energy is generated far from where it can be readily consumed, leading to persistent \emph{excess energy} near production sites that is difficult to deliver to demand centers at scale.

\subsection{Integration Gap Between Renewables and Grids}

Fig.~\ref{fig:renew_capacity_and_waste}(a) illustrates the rapid global growth in the number of renewable power plants. The global installed capacity of solar and wind has experienced $2.64\times$ and $1.50\times$ increases from 2020 to 2024~\cite{OurWorldInData_RenewableCapacity}, reflecting sustained investment and policy support for renewable energy generation. However, this expansion in generation capacity has not been matched by commensurate growth in grid infrastructure or integration capability. As shown in Fig.~\ref{fig:renew_capacity_and_waste}(b), the average unutilized power ratio is 6.77\% (wind) and 9.16\% (solar) in Inner Mongolia, and 32.38\% (wind) and 35.05\% (solar) in Tibet~\cite{XNYXNYJ_Website}.

This gap between generation growth and delivery capacity stems primarily from structural transmission constraints. Integrating remote renewable resources requires slow and capital-intensive upgrades, including new transmission corridors, substations, and permitting processes \cite{bird2014curtailment}. As a result, the grid increasingly acts as a bottleneck, manifesting in rising curtailment across high-renewable markets. For example, in 2024, the California Independent System Operator curtailed 3.4 million megawatt-hour (MWh) of utility-scale wind and solar generation, a 29\% year-over-year increase, due to renewable output exceeding local demand and transmission capacity during high-production periods \cite{eia2025caiso}. Similar trends are observed in China, where official data reported by Reuters indicate a widening utilization gap in early 2025, with solar curtailment increasing year-over-year and some remote provinces experiencing substantially higher curtailment rates than eastern load centers \cite{reuters_china_renewable}.

Together, these patterns suggest that the marginal cost of fully delivering remote renewable energy is increasingly dominated by grid expansion, while a non-trivial fraction of clean energy remains stranded near its production sites.

\begin{figure}[t]
     \centering
     \includegraphics[width=\linewidth]{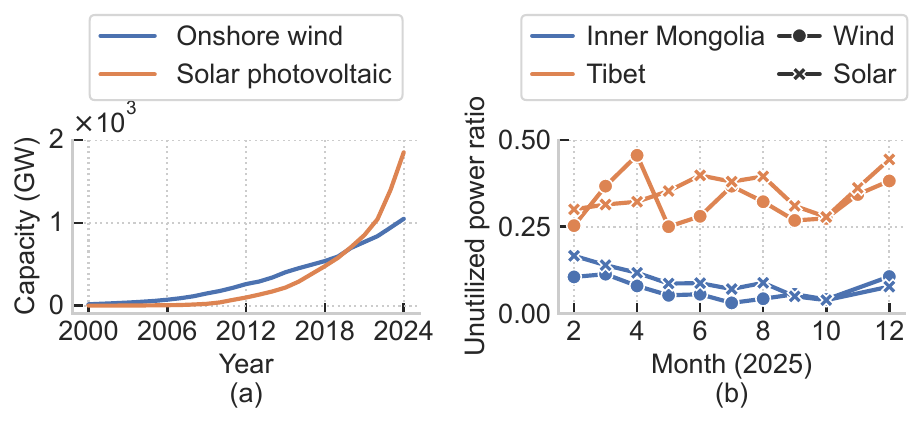}
     \caption{Growth in installed capacity of power plants, together with the unutilized power ratio over time.}
     \label{fig:renew_capacity_and_waste}
     \vspace{-4mm}
\end{figure}

\begin{figure}[t]
     \centering
     \includegraphics[width=\linewidth]{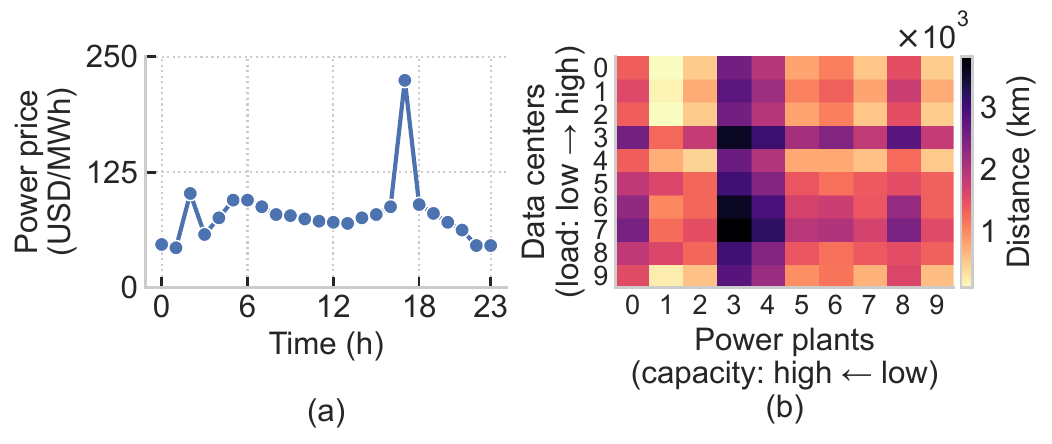}
     \caption{Electricity price over a day and the heatmap of distances between power plants and data centers.}
    \vspace{-4mm}
     \label{fig:power_price_and_distance}
\end{figure}

\begin{figure}[t]
     \centering
     \includegraphics[width=\linewidth]{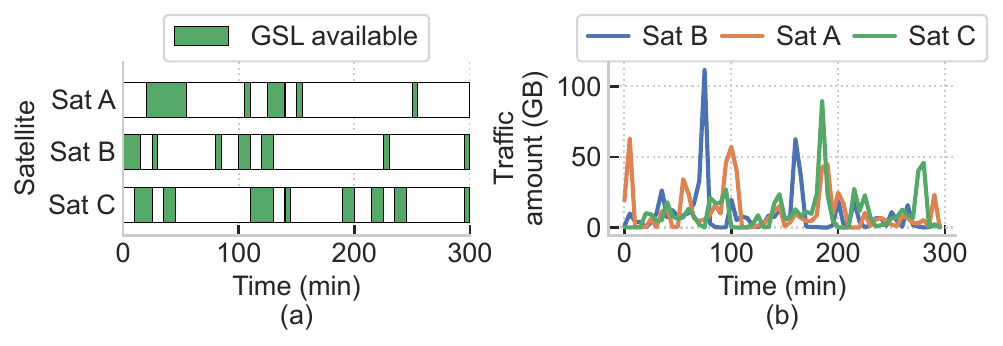}
     \caption{GSL availability windows of the satellite and the corresponding traffic volume over time.}
     \label{fig:satellite_window_and_traffic_amount}
     \vspace{-3mm}
\end{figure}

\subsection{Coupling Between Varying Prices and Local Compute}
Even when renewable energy is available, its \emph{effective} cost varies substantially over time. As shown in Fig.~\ref{fig:power_price_and_distance}(a), the average electricity price is only 79.50~USD/MWh, while peak prices reach 224.40~USD/MWh, nearly a 2.82$\times$ difference. This pronounced volatility reflects time-varying pricing mechanisms (e.g., time-of-use tariffs) combined with the diurnal and seasonal structure of renewable generation. As a result, the opportunity cost of consuming electricity fluctuates significantly across time, even within the same region.

At the same time, modern computing infrastructure is increasingly deployed in energy-advantaged locations. Fig.~\ref{fig:power_price_and_distance}(b) shows the spatial relationship between the ten most compute-intensive data centers and the ten largest power plants in China~\cite{Blackridge_TopDataCenters}. The results suggest strong spatial coupling: major compute clusters are frequently deployed near large generation sites, with an average separation of about 1583~km. The corresponding electricity transmission loss over this distance is negligible for long-haul ultra-high voltage transmission, while the distance from data centers to the regional transmission backbone is typically even smaller.

This co-location is particularly important for satellite offloading. Data centers near renewable plants can directly absorb time-local excess energy through computation, bypassing long-distance transmission and costly grid upgrades. Consequently, renewable-adjacent compute capacity provides a flexible, infrastructure-light way to convert otherwise underutilized renewable energy into useful work.

\subsection{Takeaways and Insights}

Our first insight is that \emph{offloading satellite tasks to renewable power plant-co-located data centers delivers both benefits in terms of sustainability and QoS}. This generation-side computation model directly consumes energy at its generation source, including otherwise curtailed renewable energy, which mitigates the long-standing spatial mismatch between renewable generation and load centers without relying on slow, capital-intensive transmission expansion. At the same time, reducing in-orbit computation cuts on-board energy consumption and mitigates battery degradation in LEO satellites by alleviating deep charge-discharge cycles. This design simultaneously addresses the core dual constraints of \textbf{Challenge 1}.

Our second insight reveals that \emph{intermittent satellite-ground link constraints and highly dynamic task arrivals are tightly coupled, and uncoordinated offloading decisions will fail to guarantee QoS or even cause task failures}. As shown in Fig.~\ref{fig:satellite_window_and_traffic_amount}(a), satellite mobility induces intermittent and capacity-limited contact windows, and contention at the satellite-ground boundary can prevent tasks from fully offloading required inputs, leading to execution failures even when end-to-end routes exist. Meanwhile, Fig.~\ref{fig:satellite_window_and_traffic_amount}(b) illustrates that satellite task demands arrive randomly with drastic fluctuations. Without coordinated scheduling, bursty traffic will rapidly exhaust the capacity of intermittent satellite-ground backhaul links, leading to severe degradation of task end-to-end QoS. This coupling creates a fundamental dilemma: the limited intermittent link capacity cannot match bursty dynamic task demands, and uncoordinated offloading will further exacerbate link contention and QoS degradation, which forms the constraint mentioned in \textbf{Challenge 2}.

Our third insight highlights that \emph{volatile energy conditions and rigid electricity budget constraints create an unavoidable trade-off between sustainability, QoS and operating costs, which cannot be resolved by naive uncoordinated offloading}. The available renewable excess energy and grid electricity prices fluctuate sharply over time with variable renewable output and grid load changes. Uncoordinated offloading decisions can rapidly exhaust the fixed electricity budget, completely erasing the intended sustainability benefits and QoS gains of ground-based processing. This core trade-off dilemma is exactly the key problem to be solved in \textbf{Challenge 3}.

Hence, we first introduce SQSO, a \emph{Sustainable and QoS-aware Satellite Offloading framework}. This framework is designed to accurately capture the impact of diverse offloading strategies on LEO satellite battery lifetime and end-to-end QoS, while quantifying the hard constraints of intermittent capacity-limited satellite-ground contact windows and fixed electricity budgets on offloading decision-making. Built on this framework, we propose the adaptive offloading orchestration algorithm, which is optimized to maximize the composite gains of system sustainability and end-to-end QoS, while systematically resolving all three core challenges.

\section{System Model and Problem Formulation}
\label{sec:system_model_and_problem_formulation}
\subsection{Preliminaries and Overview}
The terrestrial infrastructure is modeled as a graph $G_T = (V_T, E_T)$, where $V_T$ denotes the set of ground station nodes. Each ground station is assumed to be collocated with a data center, and the fiber-optic links among these ground stations and data centers are incorporated into the edge set $E_T$.

In practice, a data center may be supplied by more than one power plant, and electricity prices can vary across different power plants. Given the high complexity of real-world electricity market bidding mechanisms and due to page limitation, we abstract such complex electricity cost dynamics as a time-varying price $\texttt{price}_v(\tau)$ that is associated with each node $v \in V_T$ and time slot $\tau$.

We model the LEO satellite network as $G_S=(V_S, E_S(\tau))$, where $V_S$ denotes the set of satellites and $E_S(\tau)$ represents the set of ground-satellite links (GSLs), denoted as $E_{S}=\{l=(u,v) \mid u\in V_S, v\in V_T\}$. The bandwidth of GSL $l$ is denoted as $b_l(\tau)$. Considering the dynamics of the satellite network topology, we denote its orbital period as $\mathbb{T}$ and adopt a discrete-time model to partition $\mathbb{T}$ into discrete intervals. The topology is assumed stable within each interval $\tau \in \mathbb{T}$.

During each interval $\tau$, a set of tasks $R(\tau)$ is generated by satellites (e.g., remote sensing image processing tasks), each represented as a 4-tuple
\begin{equation}\label{eq:task_set}
\begin{split}  
R(\tau) =  \{r = (v_r^s,v_r^d,\xi_r,\zeta_r) \mid v_r^s \in V_S,& v_r^d \in V_E, \\
& \xi_r,\zeta_r \in \mathbb{R}^+\},
\end{split}
\end{equation}
where $v_r^s$ and $v_r^s$ denote the source and destination of the task, respectively; $\xi_r$ denotes the task execution time in cycles, and $\zeta_r$ denotes the data volume to be transmitted of the task, both being non-negative real numbers.

Tasks are generated by upper-layer applications and exhibit atomicity, meaning they cannot be partitioned further or processed across multiple data centers. We denote the budget for terrestrial task processing as $B(\tau)$ per interval.

\begin{figure}[t]
     \centering
     \includegraphics[width=0.8\linewidth]{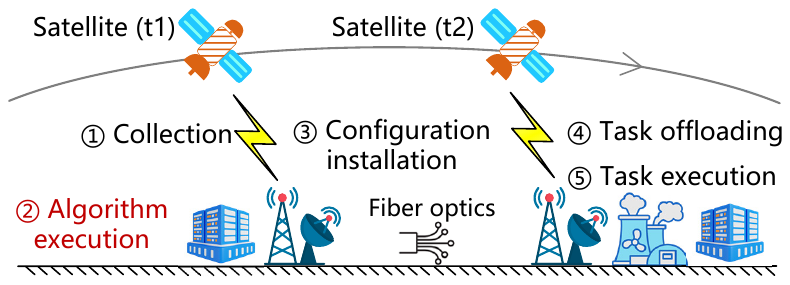}
     \caption{Overview of the Sustainable and
QoS-aware Satellite Offloading (SQSO) framework.}\vspace{-4mm}
     \label{fig:sqso_framework}
\end{figure}

As shown in Fig. \ref{fig:sqso_framework}, 
the Sustainable and QoS-aware Satellite Offloading (SQSO) framework works as below. At time $t_1$, the ground station collects essential status information, including pending tasks from the satellites. Subsequently, the proposed algorithm for solving the formulated optimization problem is executed on the ground. Upon obtaining the computational results, the configuration containing the task-offloading strategies is deployed to the target satellites. Thereafter, the configured satellites offload tasks to the designated ground stations, and these tasks are then transmitted to and executed within the data centers near the renewable power plants.

\subsection{Mobility-Aware Communication Model}
We define a binary indicator $x_r^l(\tau)$ for all $l \in E_{S}$ to denote whether the task $r$ choose the GSL $l$ to offload its data ($x_r^l(\tau) = 1$) or not ($x_r^l(\tau) = 0$).

Given the limited communication window between satellites and ground stations, we impose the bandwidth constraint
\begin{equation} \label{eq:bandwidth_constraint}
    \sum\nolimits_{r\in R(\tau)} (\zeta_r \times x_r^l(\tau)) \le  b_l(\tau) \times |\tau|,\ \forall l\in E_{S},
\end{equation}
where $|\tau|$ denotes the length of interval $\tau$. This constraint ensures that the total data offloaded from satellites to the ground within each interval does not exceed the transmission capacity of the GSLs.

\subsection{Capability-Based Cost Model}
We quantify the task processing cost using electricity fees. Specifically, we assume the power and processing capability measured in Hz allocated to task $r$ are provided by operators, denoted as $\texttt{power}_{r}^e$ and $\texttt{capability}_r^e$, respectively, where $v_r^e\in V_T$ represents the chosen power plant. For GPU tasks, we normalize the capability by multiplying the processing capability of each kernel by the number of kernels. It should be noted that the actual efficiency of GPU is related to parallel efficiency; the above normalization is a simplified estimation. In practical scenarios, the capability can be further adjusted according to the parallelization degree of specific tasks.
Based on this, the cost model for quantifying the expected payment of processing task $r$ is formulated as
\begin{equation}\label{eq:elec_fee}
    \theta_r(\tau) = \frac{\texttt{power}_{r}^e \times \xi_r \times \texttt{price}_{v_r^e}(\tau)}{\texttt{capability}_r^e}.
\end{equation}

Accordingly, the payment constraint stipulates that the total payment for scheduled tasks shall not exceed the budget, which is expressed as
\begin{equation} \label{eq:budget_constraint}
    \sum\nolimits_{r\in R(\tau)} \theta_r(\tau) \times y_r(\tau) \le B(\tau),
\end{equation}
where $y_r(\tau)$ is a binary indicator denoting whether task $r$ is scheduled in interval $\tau$. Specifically, $y_r(\tau)=1$ when task $r$ is scheduled, and $y_r(\tau)=0$ when it is not scheduled.

\subsection{QoS-Aware and Sustainability Utility Model}
The utility function consists of the quality-of-service (QoS) component and the sustainability component. Mathematically, it is expressed as
\begin{equation}
U(\tau)=U_{\texttt{QoS}}(\tau) + U_{\texttt{sustainability}}(\tau).
\end{equation}

\textbf{QoS-aware utility.} In this paper, $U_{\texttt{QoS}}(\tau)$ is defined as the ratio of the number of scheduled tasks to the total end-to-end delay from the power plant to the destination. We have 
\begin{equation}\label{eq:qos_utility}
    U_{\texttt{QoS}}(\tau)=\frac{|y_r(\tau)|}{\sum\nolimits_{r\in R(\tau)}x_r^l(\tau)\times \texttt{Delay}_{v_r^e\rightarrow v_r^d}},
\end{equation}
where $v_r^e$ is determined by $x_r^l(\tau)$ (i.e., $x_r^l(\tau)=1$ implies $l=(v_r^s, v_r^e)$), and $\texttt{Delay}_{v_r^e\rightarrow v_r^d}$ is estimated by dividing the distance between $v_r^e$ and $v_r^d$ by the speed of light. The total end-to-end delay is placed in the denominator to constrain the selected power plant from being excessively distant from the destination of the task.

Notably, for the numerator in (\ref{eq:qos_utility}), each satellite assigns task priorities based on application requirements, determining data offloading sequences. However, our $U_{\texttt{QoS}}(\tau)$ metric may introduce fairness concerns: for instance, if two satellites can establish GSLs to the same ground station, the one with lower-cost tasks (i.e., the expected electricity fee in Formula (\ref{eq:elec_fee})) is more likely to be scheduled for more tasks.

Nevertheless, our framework still has the potential to outperform traditional schemes, as conventional approaches require all data to wait on the satellite until it flies over a destination-proximal ground station. Additionally, fairness can be easily addressed by incorporating a fairness penalty into $U_{\texttt{QoS}}(\tau)$. Due to page limitations, we do not elaborate on fairness design herein and instead focus on the on-demand offloading orchestration framework.

\textbf{Sustainability utility.} For $U_{\texttt{sustainability}}(\tau)$, we focus on the impact of task processing on satellite battery lifespan.

Each LEO satellite harvests fixed power during sunlight periods and no power during eclipse periods \cite{yang_towards_2016}. Over interval $\tau$, the harvested energy of satellite $v$ is denoted as $\mathbb{E}_v^+(\tau)$. The initial battery level at the start of $\tau$ is $\mathbb{E}_v^0(\tau)$, and the baseline energy consumption for routine operations is $\mathbb{E}_v^-(\tau)$. The power for task processing is denoted as $\texttt{power}_v$.

Hence, if the task $r$ located in satellite $v$ is scheduled, the energy consumption for task processing can be saved and the remaining battery energy at the end of $\tau$ grows. We have
\begin{equation}\label{eq:remaining_energy}
\begin{split}
    \mathbb{E}_v^{\mathrm{rem}}(\tau) &= \max \{\mathbb{E}_v^0(\tau) + \mathbb{E}_v^+(\tau) - \mathbb{E}_v^-(\tau)\\
    &-\sum\nolimits_{r\in R(\tau), v_r^s=v} (1-y_r(\tau)) \mathbb{E}_v^{\texttt{process}}(\tau),0\},
\end{split}
\end{equation}
where $\mathbb{E}_v$ is a time-independent variable representing the battery capacity of satellite $v$ and $\mathbb{E}_v^{\texttt{process}}(\tau)$ can be obtained based on its processing capability $\texttt{capability}_v$ as below
\begin{equation}
   \mathbb{E}_v^{\texttt{process}}(\tau) = \frac{\texttt{power}_v \times \xi_r}{\texttt{capability}_v}.
\end{equation}
 
We focus on lithium-ion batteries, where cycle life is dependent on Depth-of-Discharge (DoD) \cite{yang_towards_2016}. For other battery chemistries (e.g., Ni-MH), similar models can be adopted with corresponding DoD-wear functions. The DoD of satellite $v$ at the start and end of interval $\tau$ are defined respectively as
\begin{equation}
    D_v^{\mathrm{begin}}(\tau) = \frac{\mathbb{E}_v - \mathbb{E}_v^0(\tau)}{\mathbb{E}_v},
\end{equation}
\begin{equation}\label{eq:dod_end}
    D_v^{\mathrm{end}}(\tau) = \frac{\mathbb{E}_v - \mathbb{E}_v^{\mathrm{rem}}(\tau)}{\mathbb{E}_v}.
\end{equation}

For lithium-ion batteries, we extend the lifespan consumption model based on \cite{yang_towards_2016} as follows
\begin{equation}\label{eq:life_consume}
L_v(\tau) = \int_{D_v^{\rm begin}(\tau)}^{D_v^{\rm end}(\tau)} 10^{0.8\,(D-1)}\bigl(1 + 0.8\,D\,\ln10\bigr)\,\mathrm{d}D.
\end{equation}

The integrand \(10^{0.8\,(D-1)}(1 + 0.8D\ln10)\) captures the nonlinear wear characteristic of lithium-ion chemistry. For alternative battery types, this kernel should be replaced with the corresponding DoD-dependent degradation function.

To minimize the impact of task processing on satellite battery lifespan, $U_{\texttt{sustainability}}(\tau)$ is quantified as the negative value of the sum of lifespan consumption across relevant satellites. It is formulated as:
\begin{equation}\label{eq:utility_sus}
   U_{\texttt{sustainability}}(\tau) = \sum\nolimits_{v\in V_S} -L_v(\tau).
\end{equation}

We formulate the problem as an on-demand joint optimization problem of offloading orchestration. Specifically, the problem aims to adaptively determine the optimal data center near a power plant for task processing, with joint consideration of the limited communication windows caused by satellite mobility and the constrained budget. It simultaneously improves battery lifespan preservation and QoS in each interval $\tau$. The decision variables include the target power plant $v_r^e$ ($\forall r\in R(\tau)$), which is determined by the binary indicator $x_r^l(\tau)$ where $x_r^l(\tau)=1$ implies $l=(v_r^s,v_r^e)$, and the scheduled tasks $y_r(\tau)$ ($\forall r\in R(\tau)$). The optimization problem is defined as
\begin{equation}
\label{eq:objective} 
\textbf{maximize}\quad U(\tau)
\end{equation}
\begin{equation*}
 s.t. \quad (\ref{eq:task_set})-(\ref{eq:utility_sus}),
 \end{equation*}
\begin{equation} \label{cons1}
x_r^l(\tau)\in\{0,1\}, y_r(\tau)\in\{0,1\},\forall r \in R(\tau), l\in E_S(\tau),
\end{equation}
\begin{equation}\label{cons2}
   x_r^l(\tau)\le y_r(\tau),\forall r \in R(\tau), l\in E_S(\tau),
\end{equation}
\begin{equation}\label{cons3}
    \sum\nolimits_{l\in E_T} x_r^l \le 1, \forall r\in R(\tau),
\end{equation}
where \eqref{eq:objective} specifies the optimization goal, \eqref{cons1} defines two binary variables and \eqref{cons2} indicates that $x_r^l(\tau)$ can be set as 1 only when the task $r$ is scheduled, i.e., $y_r(\tau)=1$. The constraint \eqref{cons3} guarantees the atomicity of each task.

\begin{theorem} \label{thm1}
The formulated problem is NP-hard
\end{theorem}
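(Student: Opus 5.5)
We prove Theorem~\ref{thm1} by a polynomial-time reduction from the 0--1 Knapsack problem, which is NP-complete in its decision form. The budget constraint (\ref{eq:budget_constraint}) is already a knapsack constraint. The main work is to neutralize the other parts of the model: the bandwidth constraint (\ref{eq:bandwidth_constraint}), the ratio form of $U_{\texttt{QoS}}$, and the nonlinear battery term (\ref{eq:life_consume}). What remains should be a linear objective with additive profits.

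\textbf{Construction.} Take a Knapsack instance with items $i=1,\dots,n$, weights $w_i$, profits $p_i$, capacity $W$ and target $P$. Build a single interval $\tau$ with one ground node $v^e\in V_T$. Let $\texttt{price}_{v^e}(\tau)=1$ and $\texttt{power}^e_r=\texttt{capability}^e_r=1$, so that $\theta_r(\tau)=\xi_r$. Create $n$ tasks with $\xi_{r_i}=w_i$ and set $B(\tau)=W$.

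Place each task $r_i$ on its own satellite $v_i$, with a GSL $l_i=(v_i,v^e)$ whose capacity $b_{l_i}(\tau)|\tau|$ is at least $\zeta_{r_i}$. Then (\ref{eq:bandwidth_constraint}) never binds, and by (\ref{cons2})--(\ref{cons3}) choosing $x_{r_i}^{l_i}=y_{r_i}$ is always feasible.

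To make the QoS term harmless, set $v_r^d=v^e$ for every task. Then all delays are zero, and $U_{\texttt{QoS}}$ becomes degenerate. To avoid that, we can instead use a common tiny delay $\delta$, which makes $U_{\texttt{QoS}}=|y|/(|y|\delta)=1/\delta$, a constant on every nonempty schedule. Either way the QoS term does not distinguish between selections, and only the sustainability term matters.

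\textbf{Encoding the profits.} For satellite $v_i$, choose $\mathbb{E}^0_{v_i}$, $\mathbb{E}^+_{v_i}$, $\mathbb{E}^-_{v_i}$ and $\texttt{power}_{v_i}/\texttt{capability}_{v_i}$ so that three conditions hold:
\begin{itemize}[leftmargin=*]
\item the $\max\{\cdot,0\}$ in (\ref{eq:remaining_energy}) is inactive;
\item $D^{\rm begin}_{v_i}=0$;
\item offloading $r_i$ lowers $D^{\rm end}_{v_i}$ from some $d_i$ to $0$.
\end{itemize}
The antiderivative of $10^{0.8(D-1)}(1+0.8D\ln 10)$ is $D\cdot 10^{0.8(D-1)}$. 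Hence $L_{v_i}$ equals $d_i\,10^{0.8(d_i-1)}$ if $r_i$ is not offloaded and $0$ otherwise.

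The objective therefore equals a constant plus $\sum_i y_{r_i}\,g(d_i)$, where $g(d)=d\,10^{0.8(d-1)}$ is strictly increasing on $[0,1]$. We choose $d_i=g^{-1}(p_i/M)$ with $M\ge\sum_i p_i$. The problem then becomes: maximize $\sum_i p_i y_i/M$ subject to $\sum_i w_i y_i\le W$, which is exactly Knapsack. A schedule with utility at least the constant plus $P/M$ exists if and only if the Knapsack instance is a yes-instance.

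\textbf{Main obstacle.} The hardest step is making this reduction polynomial, because $g^{-1}$ generally produces irrational numbers. We plan to handle this in one of two ways. The first option is to scale the profits to integers and show that approximating $d_i$ to polynomially many bits preserves the yes/no gap, since profits are integers and $g$ is Lipschitz with a bounded-below derivative on $[0,1]$. The simpler alternative is to note that the profits need not be matched exactly. It suffices to reduce from the special case $p_i=w_i$ (Subset Sum), where every $d_i$ is equal up to the factor fixed by $\xi_i$; but there $g$ is nonlinear in $\xi$, so we would still need the approximation argument.

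A fallback worth trying first is to treat $d_i$ directly as rational input data: the instance's energy parameters are given numbers, and we may pick rational $d_i$ and define $p_i:=g(d_i)$. Hardness then follows for Knapsack with such profits, as long as Knapsack remains NP-hard when profits are restricted to the image of rational points under $g$. We expect this to hold via the rounding argument.
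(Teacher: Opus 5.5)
Your reduction is sound in outline, but it takes a genuinely different route from the paper.

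\textbf{How the two routes differ.} The paper freezes $U_{\texttt{sustainability}}$. It then asserts that maximizing $U_{\texttt{QoS}}$ under (\ref{eq:bandwidth_constraint}), (\ref{eq:budget_constraint}) and (\ref{cons3}) is a multi-constraint 0--1 knapsack. Each task is treated as an item with an additive utility gain, and the bandwidth and budget limits as several capacities. You do the opposite. You neutralize $U_{\texttt{QoS}}$ and the bandwidth constraint, and keep the budget as a single knapsack constraint. The additive profits go into the separable battery-wear term, using the antiderivative $D\,10^{0.8(D-1)}$. Placing one task per satellite lets the ratio $\texttt{power}_{v_i}/\texttt{capability}_{v_i}$ decouple the wear $d_i$ from the weight $\xi_{r_i}$.

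\textbf{What your route buys.} It is an argument that actually goes through. Under the intended semantics, $U_{\texttt{QoS}}$ is the reciprocal of the average delay of the scheduled tasks, so it is not additive over items. Your equal-delay observation shows it can even be constant. Because feasible sets are downward closed, the paper's QoS-only special case is solved by one feasible task--GSL pair of minimum delay, so it is polynomial. The paper also states its reduction in the wrong direction (special case to knapsack). Its proof is shorter. Yours pays for correctness with a numerical-precision step, because $g^{-1}(p_i/M)$ is irrational, and so is $g(\tilde d_i)$ even for rational $\tilde d_i$.

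\textbf{Closing the precision step.} Commit to your first option and drop the Subset Sum aside and the fallback; the fallback only restates what the rounding argument must prove.
\begin{itemize}
\item Round so that $|g(\tilde d_i)-p_i/M|<1/(4nM)$. This needs $O(\log(nM))$ bits, since $g'\le 1+0.8\ln 10$ on $[0,1]$, and $g^{-1}$ can be computed by bisection.
\item Integrality of the profits then makes every optimal schedule of your instance an optimal knapsack solution.
\item For the decision version, use the rational threshold $\tilde C+(P-\tfrac12)/M$. Here $\tilde C$ approximates the irrational constant $1/\delta-\sum_i g(\tilde d_i)$ to within $1/(4M)$.
\end{itemize}

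\textbf{Two modeling points to state explicitly.} First, a scheduled task uses exactly one GSL, i.e.\ $\sum_l x_r^l=y_r$. Read literally, (\ref{cons2})--(\ref{cons3}) allow $y_r=1$ with every $x_r^l=0$. That inflates the numerator of (\ref{eq:qos_utility}) without touching its denominator, which breaks your claim that $U_{\texttt{QoS}}$ is constant. Second, fix a value for the empty schedule (the $0/0$ case); this is harmless once $P\ge 1$.
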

\begin{proof}
We consider a special case of the problem \eqref{eq:objective} where the battery lifespan utility $U_{\texttt{sustainability}}(\tau)$ is fixed and only the QoS utility $U_{\texttt{QoS}}(\tau)$ is maximized under the bandwidth constraint \eqref{eq:bandwidth_constraint}, budget constraint \eqref{eq:budget_constraint}, and task atomicity constraint \eqref{cons3}.
In this case, each task corresponds to an atomic item with a transmission cost, an execution cost, and a utility gain, while the constraints correspond to multiple knapsack capacities.
This special case reduces to the \emph{multi-constraint 0-1 knapsack problem}, which is known to be NP-hard.
The reduction can be accomplished in polynomial time, which implies that the formulated problem is NP-hard.
\end{proof}

\section{Adaptive Offloading Orchestration: Algorithm and Analysis}\label{sec:algorithm}
Given the NP-hardness of the formulated optimization problem, we propose a greedy algorithm to solve it. This algorithm comprises three core steps: first, it relaxes the communication window and budget constraints to determine the set of candidate satellites; second, it schedules tasks generated by these candidate satellites with joint consideration of communication and budget constraints; third, it offloads the scheduled task data to the power plant-collocated data center, settle the electricity fees, and execute the tasks. Unscheduled tasks on the satellite will either be rescheduled in the next interval via our algorithm or transmitted to the destination when the satellite passes over the ground station proximal to the destination.

\subsection{Candidate Satellite Set Construction}
Considering the ever-expanding scale of LEO satellite networks, we construct a candidate satellite set to reduce the problem scale. This set is an ordered collection containing all satellites that generate tasks. Formally, it is defined as
\begin{equation}\label{eq:overline_v_s}
    \overline{V}_S(\tau)=\{v \mid v\in V_S, \exists r\in R(\tau),v=v_r \}.
\end{equation}

To align with the problem's optimization objective, we sort $\overline{V}_S(\tau)$ such that the proposed algorithm can take this ordered set as input to maximize utility. Prior to presenting the sorting criteria, we first define several key concepts, all derived by relaxing the communication window and budget constraints.

For a satellite $v\in \overline{V}_S(\tau)$, the set of tasks it generates is denoted as
\begin{equation}\label{eq:r_v_tau}
    R_v(\tau) = \{r \mid r\in R(\tau)\}, \forall v\in \overline{V}_S(\tau).
\end{equation}

We define the normalized expected electricity fee for executing tasks in $R_v(\tau)$ as
\begin{equation}
    \overline{\theta}_v(\tau) = \frac{\sum_{r\in R_v(\tau)}\theta_r(\tau)}{|R_v(\tau)|\max_{r\in R_v(\tau)}\theta_r(\tau)}.
\end{equation}

Similarly, the normalized data volume of all tasks in $R_v(\tau)$ is defined as
\begin{equation}
    \overline{\zeta}_{v}(\tau) =\frac{\sum_{r\in R_v(\tau)}\zeta_r}{|R_v(\tau)|\max_{r\in R_v(\tau)}\zeta_r}.
\end{equation}

Finally, we need to quantify the impact of offloading task data to the data center on utility. Specifically, a satellite may establish GSLs with multiple ground stations; we adopt a greedy strategy to select the ground station that minimizes the end-to-end delay. This optimal ground station is 
\begin{equation}
    v_r^{e,*}=\arg\min\nolimits_{v_r^e\in V_T} \texttt{Delay}_{v_r^e\rightarrow v_r^d}
\end{equation}

Accordingly, the impact on QoS utility can be quantified as
\begin{equation}
\Delta U_{\texttt{QoS}}^v(\tau) = \frac{|R_v(\tau)|}{\sum_{r\in R_v(\tau)}\texttt{Delay}_{v_r^{e,*}\rightarrow v_r^d} },
\end{equation}
where $|R_v(\tau)|$ is placed in the numerator to estimate the QoS utility improvement if all tasks on satellite $v$ are scheduled.

For the impact on sustainability utility, we modify Formula \eqref{eq:remaining_energy} to calculate the remaining battery energy when tasks are processed locally (without offloading to the data center), which can be denoted as
\begin{equation}\label{eq:remaining_energy_new}
\begin{split}
        \mathbb{E}_v^{\mathrm{rem}, \mathrm{local}}(\tau) = & \min\{\mathbb{E}_v^0(\tau) + \mathbb{E}_v^+(\tau) - \mathbb{E}_v^-(\tau)\\
        &-\sum\nolimits_{r\in R(\tau), v_r^s=v}  \mathbb{E}_v^{\texttt{process}}(\tau),0\},
        \end{split}
\end{equation}
where the superscript $\mathrm{local}$ indicates local task processing on the satellite. 

By substituting $\mathbb{E}_v^{\mathrm{rem}, \mathrm{local}}(\tau)$ for $\mathbb{E}_v^{\mathrm{rem}}(\tau)$ in Formula \eqref{eq:dod_end}, we can derive the expected lifespan consumption via Formula \eqref{eq:life_consume}, denoted as $L_v^{\mathrm{local}}(\tau)$. Thus, the impact on sustainability utility is expressed as
\begin{equation}
    \hspace{-1mm}\Delta U_{\texttt{sustainability}}^v(\tau) = L_v^{\mathrm{local}}(\tau)-L_v(\tau), \forall v\in \overline{V}_S(\tau).
\end{equation}

Based on the above analysis, we sort $\overline{V}_S(\tau)$ in non-increasing order of the marginal utility gain $m_v(\tau)$. This gain is defined as the ratio of the sum of QoS utility and sustainability utility improvements to the sum of normalized expected electricity fees and normalized data volume. Formally
\begin{equation}\label{eq:sorting_criteria}
    m_v(\tau) = \frac{\Delta U_{\texttt{QoS}}^v(\tau)+ \Delta U_{\texttt{sustainability}}^v(\tau)}{\overline{\theta}_v(\tau)+\overline{\zeta}_{v}(\tau)}.
\end{equation}

Notably, the normalized expected electricity fees and normalized data volume are placed in the denominator to measure the utility improvement per unit of budget and GSL bandwidth consumption.

\subsection{Algorithm Design}
The proposed Adaptive Offloading Orchestration (\schemeAO{}) algorithm takes the topologies of the satellite network and terrestrial infrastructure, along with other necessary information (e.g., electricity prices $\{\texttt{price}_v|v\in V_T\}$), as inputs.

In the candidate satellite set construction stage (lines \ref{algline:stage1_begin}–\ref{algline:stage1_end}), we first initialize $\overline{V}_S(\tau)$ and $R_v(\tau)$ in accordance with \eqref{eq:overline_v_s} and \eqref{eq:r_v_tau}, respectively. Subsequently, $\overline{V}_S(\tau)$ is sorted in non-increasing order of $m_v(\tau)$, where $m_v(\tau)$ is calculated via \eqref{eq:sorting_criteria}.

In the orchestration stage (lines \ref{algline:stage2_begin}–\ref{algline:stage2_end}), the algorithm selects the first element of the sorted $\overline{V}_S(\tau)$ (i.e., the satellite $v$ with the maximum $m_v(\tau)$) and focuses on the tasks generated by $v$. An empty set $V_{\texttt{gs}}$ is initialized; a candidate power plant $v_r^e$ is added to $V_{\texttt{gs}}$ if it satisfies both constraints \eqref{eq:bandwidth_constraint} and \eqref{eq:budget_constraint}.

Specifically, after evaluating all GSLs, the algorithm checks if $V_{\texttt{gs}}$ is non-empty. If so, the target power plant is selected as the one with the minimal expected electricity fee (line \ref{algline:core_function}). Meanwhile, the corresponding indicators $x_r^l(\tau)$ (for $l=(v_r^s,v_r^e)$) and $y_r(\tau)$ are updated to 1. It should be noted that $x_r^l(\tau)$ is set to 0 in line \ref{algline:zero_setting} to ensure that each task is processed by only one data center. If $V_{\texttt{gs}}$ is empty, both indicators are set to 0 (line \ref{algline:corner_case}).

Finally, the algorithm removes $v$ from $\overline{V}_S(\tau)$ (line \ref{algline:removal_and_update}) and repeats the aforementioned process until $\overline{V}_S(\tau)$ is empty, indicating that all candidate satellites generating tasks have been considered.

Notably, there is no need to explicitly update the remaining transmittable data of GSLs and the remaining budget after scheduling a task. This is because both indicators are configured in lines \ref{algline:updating_begin}–\ref{algline:updating_end}, and both constraints \eqref{eq:bandwidth_constraint} and \eqref{eq:budget_constraint} already take these two indicators into account. An overview of the proposed algorithm is presented in Fig. \ref{fig:overview_algo}.

Furthermore, although task priorities are specified by upper-layer applications, the algorithm still iterates over all tasks in $R_v(\tau)$ even if a task cannot be scheduled (i.e., $V_{\texttt{gs}}$ is empty after constraint checks). The rationale is to further optimize the objective by verifying whether tasks associated with satellites having smaller $m_v(\tau)$ (indicating a lower ratio of total utility improvement to the sum of normalized expected electricity fees and normalized data volume) can be scheduled.

\begin{algorithm}[t]
\textbf{Input:} \parbox[t]{0.9\linewidth}{$G_S, G_T, R(\tau), \{\texttt{price}_v|v\in V_T\},\{\mathbb{E}_v|v\in V_S\}\\
\{\texttt{power}_r^e,\texttt{capability}_r^e|r\in R(\tau), v_r^e\in V_T\},\\
\{\texttt{power}_v,\texttt{capability}_v|v\in V_S\}$} \\
\textbf{Output:} $\{x_r^l,y_r|\forall r\in R(\tau), l\in E_S(\tau)\}$
% }
\caption{Adaptive Offloading Orchestration (\schemeAO{})} \label{algo:adaptive_offloading}
\begin{algorithmic}[1]
\State{$\text{Initialize}~\overline{V}_S(\tau)~\text{and}~R_v(\tau)~\text{by}~\eqref{eq:overline_v_s}~\text{and}~\eqref{eq:r_v_tau}$}\label{algline:stage1_begin}
\State {Sort $\overline{V}_S(\tau)$ in the non-increasing order of $m_v(\tau)$}\label{algline:stage1_end}
\While{$\overline{V}_S(\tau)\neq \emptyset$} \label{algline:stage2_begin}
\State{$v\leftarrow~\text{the first element of}~\overline{V}_S(\tau)$}
\For{$r\in R_v(\tau)$}
\State {$V_{\texttt{gs}}=\emptyset$}
\For{$l=(v_r^s,v_r^e)\in E_S(\tau)$}
\If{Constraints \eqref{eq:bandwidth_constraint} and \eqref{eq:budget_constraint} are satisfied}
\State{$V_{\texttt{gs}}=V_{\texttt{gs}}\cup \{v_r^e\}$}
\EndIf
\EndFor
\If{$V_{\texttt{gs}}\neq \emptyset$} \label{algline:updating_begin}
\State {$v_r^e = \arg\min_{v_r^e\in V_{\texttt{gs}}} \theta_r(\tau)$}\label{algline:core_function}
\State {$x_r^l(\tau)=1~\text{where}~l=(v_r^s,v_r^e),y_r(\tau)=1$}
\State{$x_r^l(\tau)=0,\forall l\in E_S(\tau)\setminus\{(v_r^s,v_r^e)\}$}\label{algline:zero_setting}
\Else
\State {$x_r^l(\tau)=0,\forall l\in E_S(\tau), y_r(\tau)=0$}\label{algline:corner_case}
\EndIf  \label{algline:updating_end}
\EndFor
\State{$\overline{V}_S(\tau)=\overline{V}_S(\tau)\setminus \{v\}$}\label{algline:removal_and_update}
\EndWhile \label{algline:stage2_end}
\end{algorithmic}
\end{algorithm} 

\begin{figure}[t]
     \centering
     \includegraphics[width=.5\linewidth]{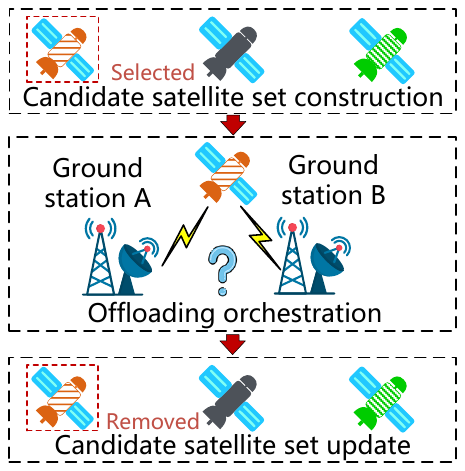}
     \caption{Overview of the \schemeAO{} algorithm.}
     \vspace{-4mm}
     \label{fig:overview_algo}
\end{figure}

\begin{theorem} \label{thm2}
The time complexity of the proposed algorithm is $O((|V_S|(\log|V_S|+\mathcal{RE})$, where $\mathcal{R}=\max_{v\in V_S}|R_v(\tau)|$ and $\mathcal{E}=\max_{\tau\in\mathbb{T}}E_S(\tau)$.
\end{theorem}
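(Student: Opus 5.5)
We bound the cost of Algorithm~\ref{algo:adaptive_offloading} stage by stage, treating the inputs (topology, prices, and the per-satellite energy and capability parameters) as given, and sum the contributions.

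\emph{Candidate construction (lines~\ref{algline:stage1_begin}--\ref{algline:stage1_end}).} Initializing $\overline{V}_S(\tau)$ and $R_v(\tau)$ via \eqref{eq:overline_v_s} and \eqref{eq:r_v_tau} needs one pass over $R(\tau)$, which has at most $|V_S|\mathcal{R}$ tasks. Computing $m_v(\tau)$ in \eqref{eq:sorting_criteria} needs, per satellite, sums and maxima of $\theta_r(\tau)$ and $\zeta_r$ over $R_v(\tau)$, plus a closed-form evaluation of the lifespan integral \eqref{eq:life_consume}, which is $O(1)$. The QoS term uses $v_r^{e,*}$. Choosing it among the ground stations reachable by a GSL of $v$ scans at most $\mathcal{E}$ links, so each satellite costs $O(\mathcal{R}\mathcal{E})$. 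Sorting $|\overline{V}_S(\tau)|\le|V_S|$ keys costs $O(|V_S|\log|V_S|)$. This stage therefore totals $O(|V_S|(\log|V_S|+\mathcal{R}\mathcal{E}))$.

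\emph{Orchestration (lines~\ref{algline:stage2_begin}--\ref{algline:stage2_end}).} The while loop runs at most $|V_S|$ times. Taking the first element and removing it in line~\ref{algline:removal_and_update} is $O(1)$ if $\overline{V}_S(\tau)$ is kept as a list or pointer. Inside, the loop over $r\in R_v(\tau)$ runs at most $\mathcal{R}$ times. The inner loop over GSLs runs at most $\mathcal{E}$ times, and the $\arg\min$ in line~\ref{algline:core_function} and the resetting in line~\ref{algline:zero_setting} are also $O(\mathcal{E})$. The orchestration stage thus costs $O(|V_S|\mathcal{R}\mathcal{E})$, provided each constraint check is $O(1)$.

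\emph{Main obstacle: the constant-time constraint checks.} Read literally, \eqref{eq:bandwidth_constraint} and \eqref{eq:budget_constraint} are sums over all of $R(\tau)$, so checking them naively would add a factor $|R(\tau)|$. The plan is to maintain two running quantities: the residual capacity $b_l(\tau)|\tau|-\sum_r\zeta_r x_r^l(\tau)$ for each GSL $l$, and the residual budget $B(\tau)-\sum_r\theta_r(\tau)y_r(\tau)$. These are initialized in $O(\mathcal{E})$ and $O(1)$ respectively, and are decremented in $O(1)$ whenever lines~\ref{algline:updating_begin}--\ref{algline:updating_end} set an indicator to $1$. Each check then just compares $\zeta_r$ and $\theta_r(\tau)$ against the current residuals. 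Implicitly this is the same bookkeeping the paper describes as ``no need to explicitly update,'' since the indicators enter both constraints.

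With each check made $O(1)$ this way, adding the two stages gives $O(|V_S|(\log|V_S|+\mathcal{R}\mathcal{E}))$, which is the bound stated in Theorem~\ref{thm2}.
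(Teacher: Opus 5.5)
Your proof is correct and uses the same stage-by-stage count as the paper's: $O(|V_S|\log|V_S|)$ for the sort, plus $O(|V_S|\mathcal{R}\mathcal{E})$ for the nested satellite/task/GSL loops. It is more careful than the paper's argument, which never costs the computation of the sort keys $m_v(\tau)$ and never justifies constant-time checks of \eqref{eq:bandwidth_constraint} and \eqref{eq:budget_constraint}; your running residuals, the $O(1)$ closed form $L_v(\tau)=\bigl[D\,10^{0.8(D-1)}\bigr]_{D_v^{\rm begin}(\tau)}^{D_v^{\rm end}(\tau)}$, and your reading of $v_r^{e,*}$ as a minimum over GSL-reachable stations rather than all of $V_T$ supply exactly what the paper leaves implicit.
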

\begin{proof}
The initialization in lines \ref{algline:stage1_begin}–\ref{algline:stage1_end} incurs a time complexity of $O(|\overline{V}_S(\tau)|\log |\overline{V}_S(\tau)|)=O(|V_S|\log|V_S|)$ iterations, where the equality holds since $ \overline{V}_S(\tau)\subseteq V_S$. For each satellite  $v\in$  $\overline{V}_S(\tau)$, we determine the target ground station for task offloading, and this process for a single satellite $v$ has a time complexity of $O(|R_v(\tau)|\cdot|E_S(\tau)|)$. The orchestration stage in lines \ref{algline:stage2_begin}–\ref{algline:stage2_end} thus has an overall time complexity of $O(|\overline{V}_S(\tau)|\cdot|R_v(\tau)|\cdot|E_S(\tau)|)=O(|V_S|\cdot|R_v(\tau)|\cdot|E_S(\tau)|)$. In summary, the time complexity of the proposed algorithm is $O(|V_S|\log|V_S|)+O(|V_S|\cdot|R_v(\tau)|\cdot|E_S(\tau)|)=O((|V_S|(\log|V_S|+\mathcal{RE})$, where $\mathcal{R}=\max_{v\in V_S}|R_v(\tau)|$ and $\mathcal{E}=\max_{\tau\in\mathbb{T}}E_S(\tau)$. This indicates that the algorithm runs in quasi-linear time relative to the satellite network scale.
\end{proof}

\section{Performance Evaluation}
\label{sec:eval}

\subsection{Evaluation Settings}

\noindent \textbf{Implementation details.} We implement the SQSO framework with the integrated \schemeAO{} algorithm, and evaluate our solution on real-world Starlink LEO constellation configurations: a baseline of 72 evenly spaced orbital planes, 22 satellites per orbit at 550 km altitude, plus scaled-up setups to validate performance for constellation expansion.

\noindent \textbf{Parameter setups.} The computing devices deployed on LEO satellites are assumed to be commercial off-the-shelf hardware, such as NVIDIA Jetson \cite{slater_jetson_2020}. Since we focus on remote sensing data, we adopt the computing power and processing capability of NVIDIA Jetson devices \cite{nvidia_jetson_nano_product_development}, where the power consumption is $10.72$ W and processing capability is $1.43$ GHz. The required processing cycles per bit is $737.5$ cycles/bit \cite{mao_computational_resource_2017}. Electricity prices are obtained from renewable energy datasets \cite{kaggle_load_wind_solar_prices_hourly, kaggle_day_ahead_spot_price_electricity} and where we set the prices in the range of $\$0.04$/kWh to $\$0.2$/kWh depending on the regions. The data generation rate is 300 Mb/s \cite{wang_earth_observe_2022}, where each Mbps transmission rate consumes $0.08$ W \cite{yang_towards_2016}. The ground station distribution is based on AWS Ground Station locations \cite{aws-ground-station-antenna-locations} and the population distribution \cite{yang_towards_2016}.

\noindent \textbf{Compared schemes}. Two state-of-the-art schemes are compared with our \schemeAO{}, each representing a different design focus for data processing and delivery in satellite networks:
\begin{itemize}[leftmargin=*, itemindent=0pt]
    \item The \schemeCCT{} \cite{chen2022delay} scheme focuses on in-space routing, where data are neither pre-processed nor offloaded to ground stations. Instead, data are routed through ISLs to the LEO satellite located above the destination.
    \item The \schemeHROA{} \cite{li_processing_2022} scheme focuses on pre-processing the region-of-interest data on LEO satellites before offloading them to ground stations. This approach reduces the required downlink bandwidth.  
\end{itemize}

\subsection{Evaluation Results}

\subsubsection{Energy Efficiency and Sustainable Lifespan}

\begin{figure}[t]
    \mbox{
        \begin{minipage}[t]{0.49\linewidth}
            \centering
            \includegraphics[width=\linewidth]{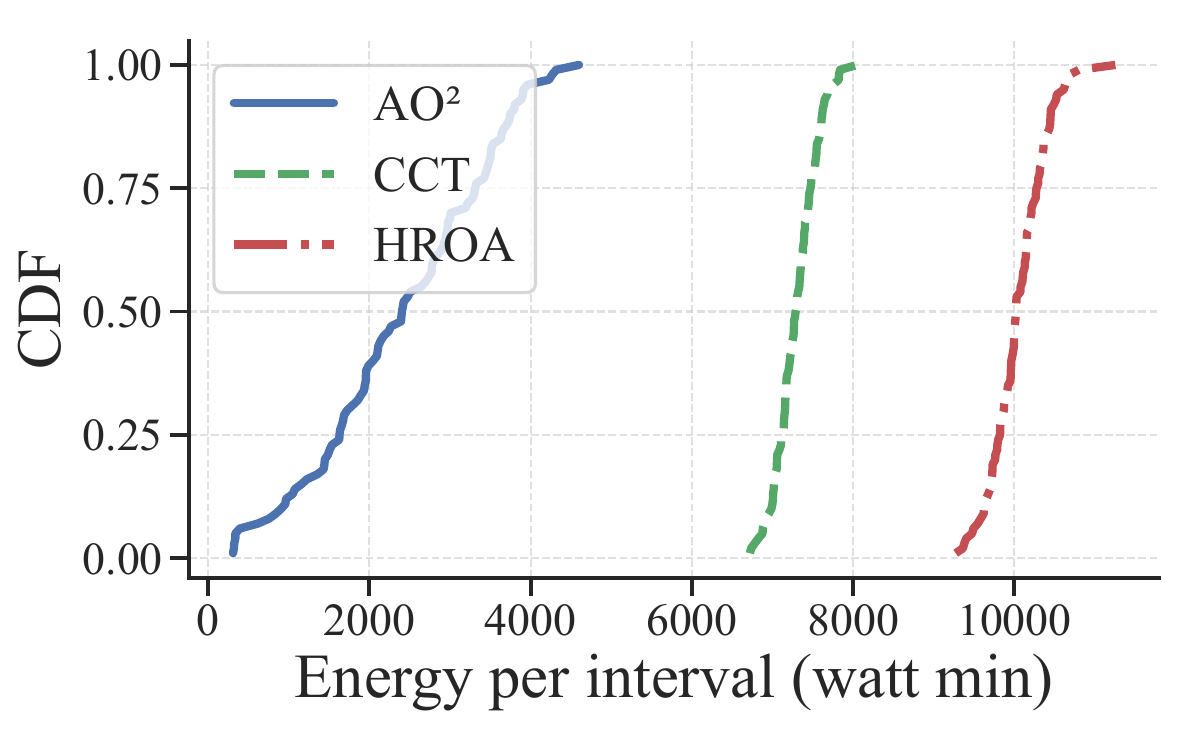} 
            \caption{CDF of total energy consumption per interval.} \label{fig:energy_consumption_cdf}
        \end{minipage}
        \begin{minipage}[t]{0.49\linewidth}
            \centering
            \includegraphics[width=\linewidth]{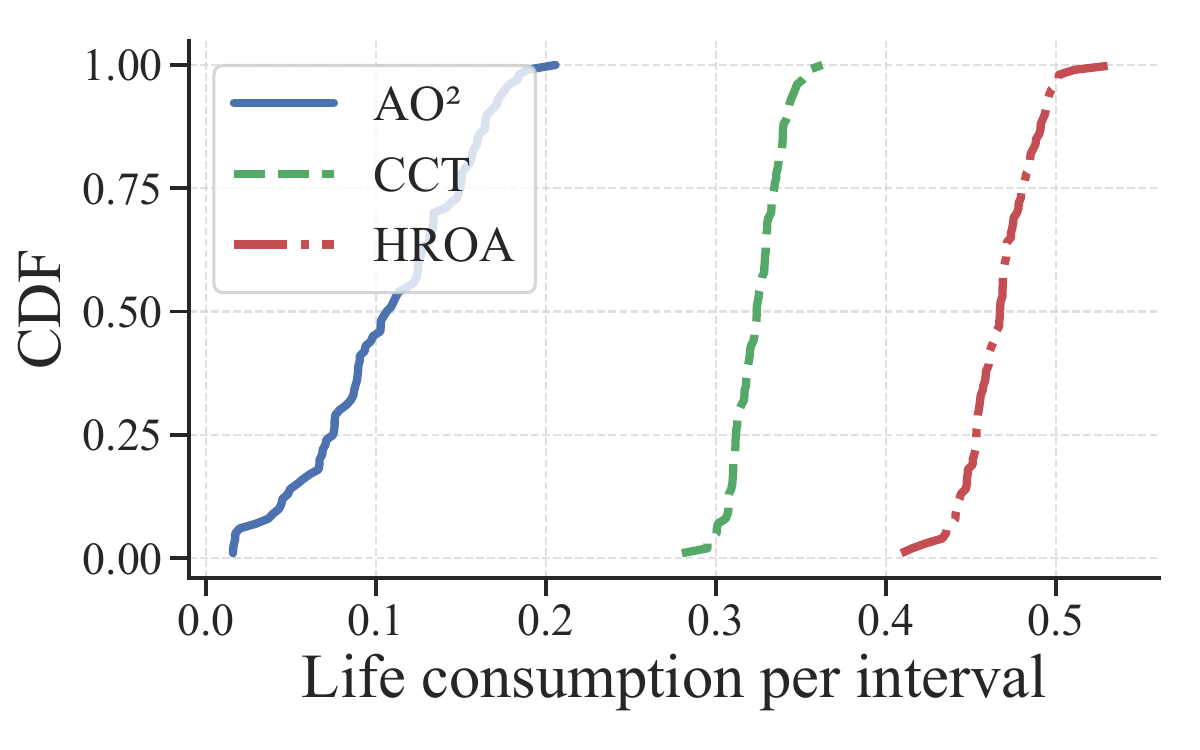} 
            \caption{CDF of total life consumption per interval.} \label{fig:life_consumption_cdf}
        \end{minipage}
    } 
    \vspace{-3mm}
\end{figure}

Fig. \ref{fig:energy_consumption_cdf} shows the CDF of the total energy consumption per interval for the LEO satellite constellation. Our proposed solution achieves \energyImprovementCCT{\%} and \energyImprovementHROA{\%} lower total energy consumption per interval compared with the \schemeCCT{} and \schemeHROA{} schemes, respectively. The \schemeHROA{} scheme consumes significantly more energy because it pre-processes data on LEO satellites before downlink transmission, leading to high CPU and GPU energy usage. Meanwhile, the \schemeCCT{} scheme consumes more energy than our \schemeAO{} because \schemeCCT{} routes data through multiple LEO satellites before reaching the destination, increasing transmission energy consumption.

Fig. \ref{fig:life_consumption_cdf} shows the CDF of total battery life consumption per interval for the LEO satellite constellation. Our solution achieves \lifeImprovementCCT{\%} and \lifeImprovementHROA{\%} lower life consumption compared with the \schemeCCT{} and \schemeHROA{} schemes, respectively. The other schemes have higher life consumption than our solution because higher energy consumption results in deeper battery discharge, leading to higher life consumption.

\subsubsection{Performance Under Different Constellation Sizes}

\begin{figure}[t]
    \mbox{
        \begin{minipage}[t]{0.49\linewidth}
            \centering
            \includegraphics[width=\linewidth]{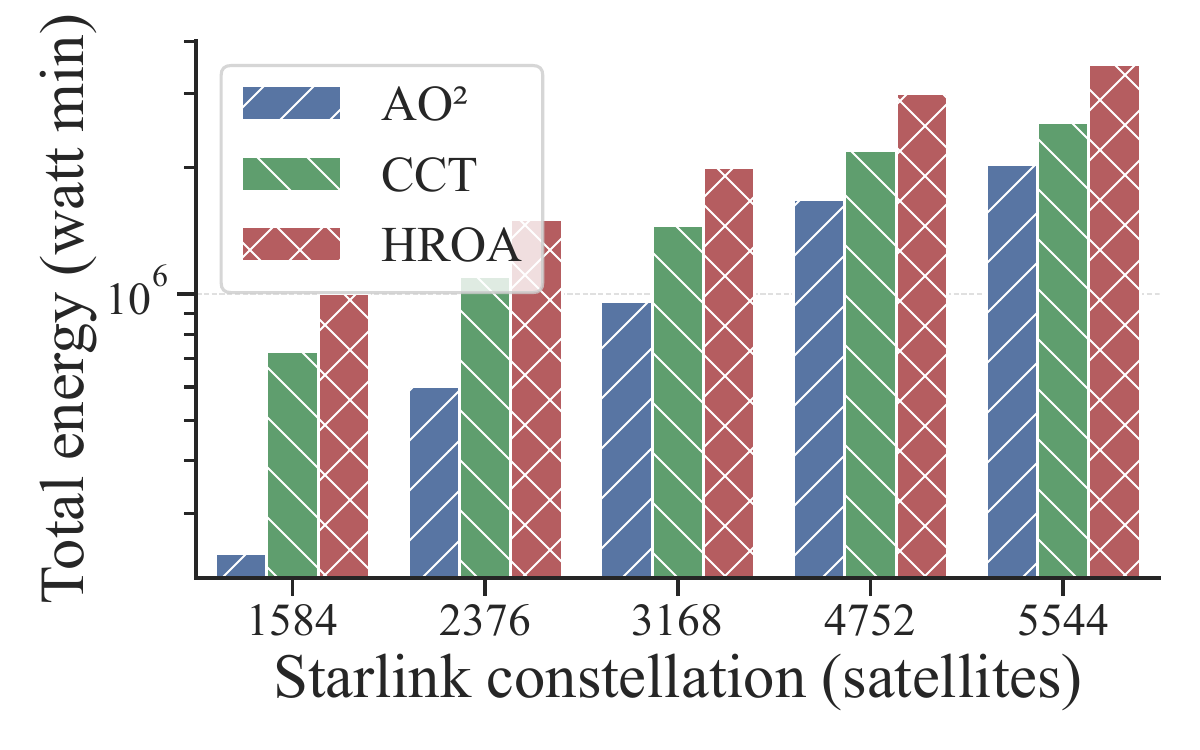} 
            \caption{Total energy consumption in different LEO satellite constellations.} \label{fig:energy_consumption_grouped_bar}
        \end{minipage}
        \begin{minipage}[t]{0.49\linewidth}
            \centering
            \includegraphics[width=\linewidth]{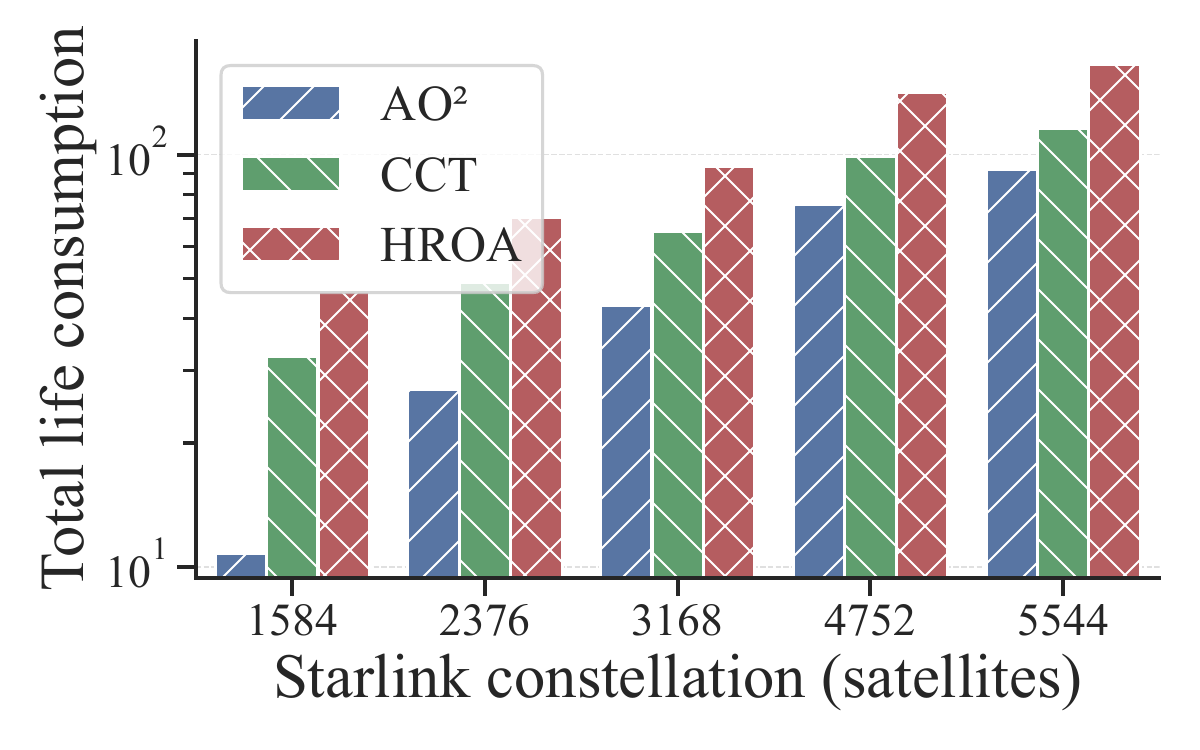} 
            \caption{Total life consumption in different LEO satellite constellations.} \label{fig:life_consumption_grouped_bar}
        \end{minipage}
    } 
\end{figure}

We further evaluate the energy efficiency and sustainability of our solution against other schemes under different Starlink constellation sizes, considering that the Starlink constellation is continuously expanding. 

As illustrated in Fig. \ref{fig:energy_consumption_grouped_bar}, the energy consumption of both our solution and other schemes increases as the constellation size grows. This is due to more data being generated from more LEO satellites. However, our solution \schemeAO{} consistently consumes less energy across different Starlink constellation sizes due to its efficient budget allocation strategy, which prioritizes LEO satellites with higher marginal utility gain and intelligently offloads data to ground stations. 

Due to its consistently low energy consumption, our \schemeAO{} also results in reduced battery life consumption across different constellation sizes, as shown in Fig. \ref{fig:life_consumption_grouped_bar}. This demonstrates that our solution maintains stable and scalable performance as the constellation expands, achieving robust improvements in both energy efficiency and long-term sustainability.

\subsubsection{Performance Under Different Budget Constraints}

\begin{figure}[t]
    \mbox{
        \begin{minipage}[t]{0.49\linewidth}
            \centering
            \includegraphics[width=\linewidth]{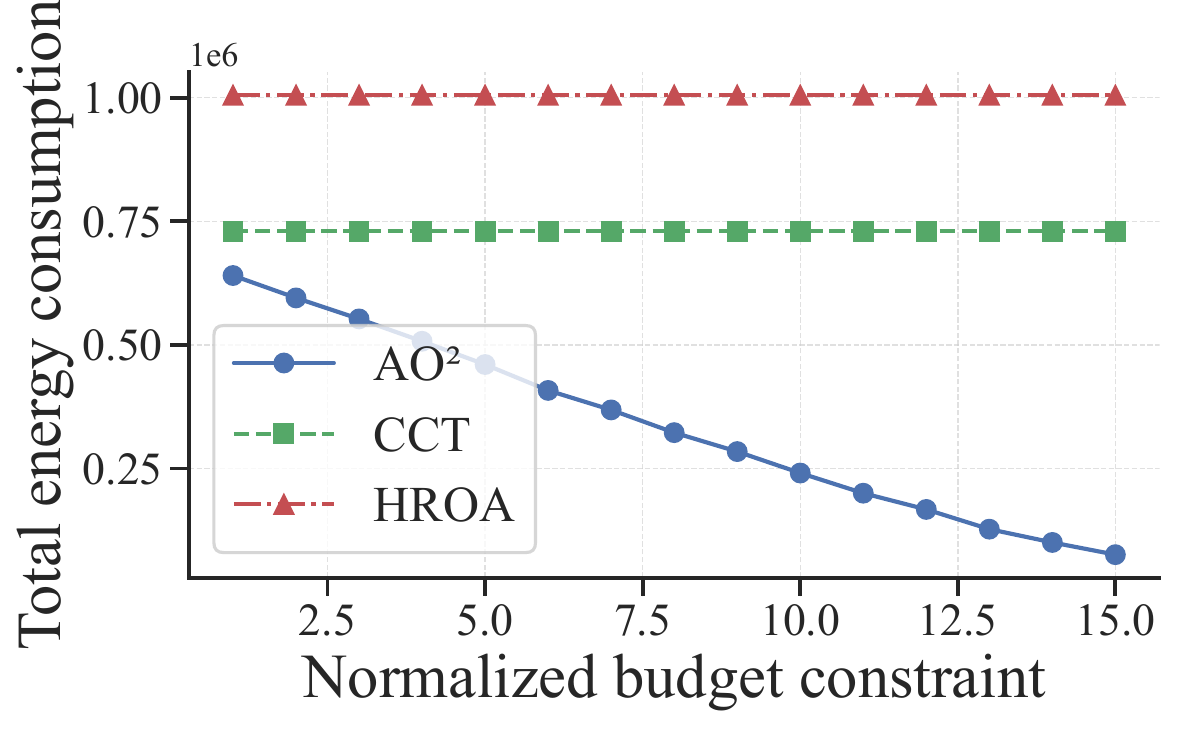} 
            \caption{Energy consumption in different budget constraints.} \label{fig:variable_budget_energy_consumption}
        \end{minipage}
        \begin{minipage}[t]{0.49\linewidth}
            \centering
            \includegraphics[width=\linewidth]{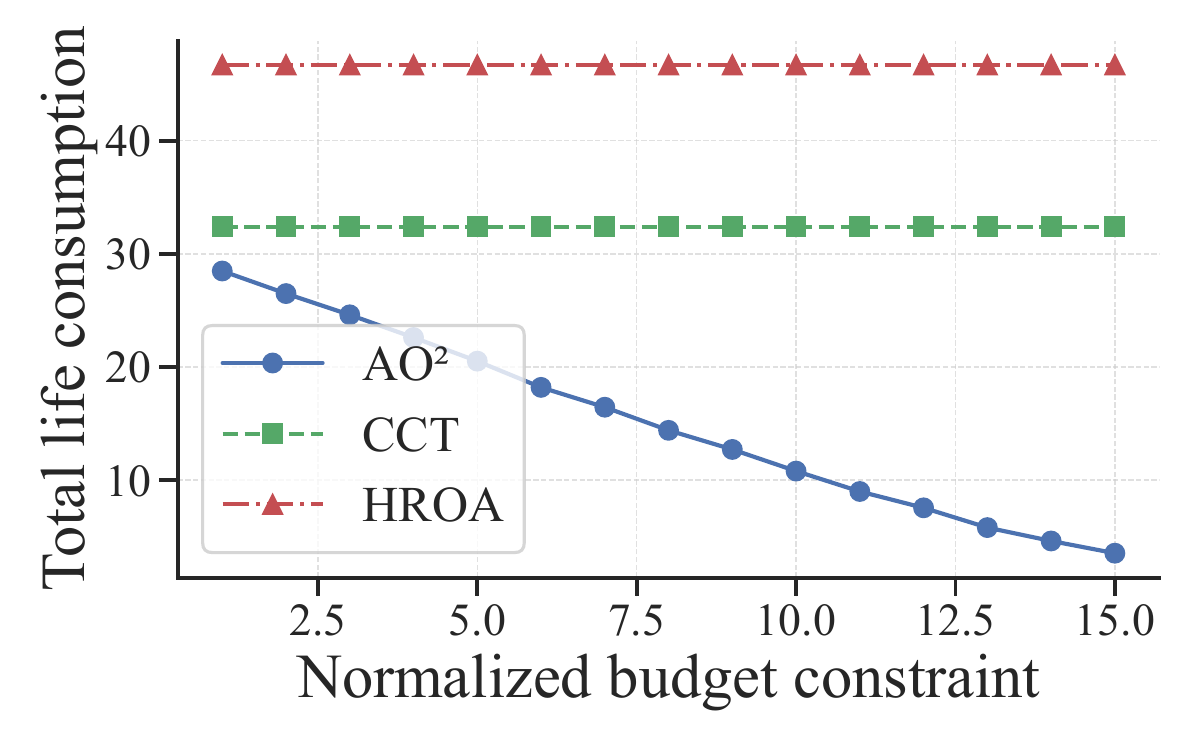} 
            \caption{Life consumption in different budget constraints.} \label{fig:variable_budget_life_consumption}
        \end{minipage}
    } 
    \vspace{-3mm}
\end{figure}

Fig. \ref{fig:variable_budget_energy_consumption} and Fig. \ref{fig:variable_budget_life_consumption} illustrate the energy consumption and battery life consumption under different budget constraints, respectively. The \schemeHROA{} scheme has the highest and nearly constant energy and life consumption across all budget constraints because it focuses on pre-processing data on LEO satellites, which consumes a high amount of energy to reduce data size before transmission, regardless of the budget constraints. For a similar reason, the \schemeCCT{} scheme routes data through multiple LEO satellites via ISLs to reach the destination, leading to consistently high energy and life consumption regardless of the budget constraints as well. 

Our proposed \schemeAO{} effectively utilizes the limited budget to offload data to ground stations in different budget constraints. When budget constraints are tight, the performance gap becomes smaller since insufficient budget provides less flexibility for the algorithm to offload data. However, when the budget becomes moderate or sufficient, our \schemeAO{} can effectively offload more tasks, which further reduces energy and battery life consumption compared to the other schemes.

\subsubsection{Delay Under Different Constellation Sizes}

\begin{figure}[t]
    \mbox{
        \begin{minipage}[t]{0.49\linewidth}
            \centering
            \includegraphics[width=\linewidth]{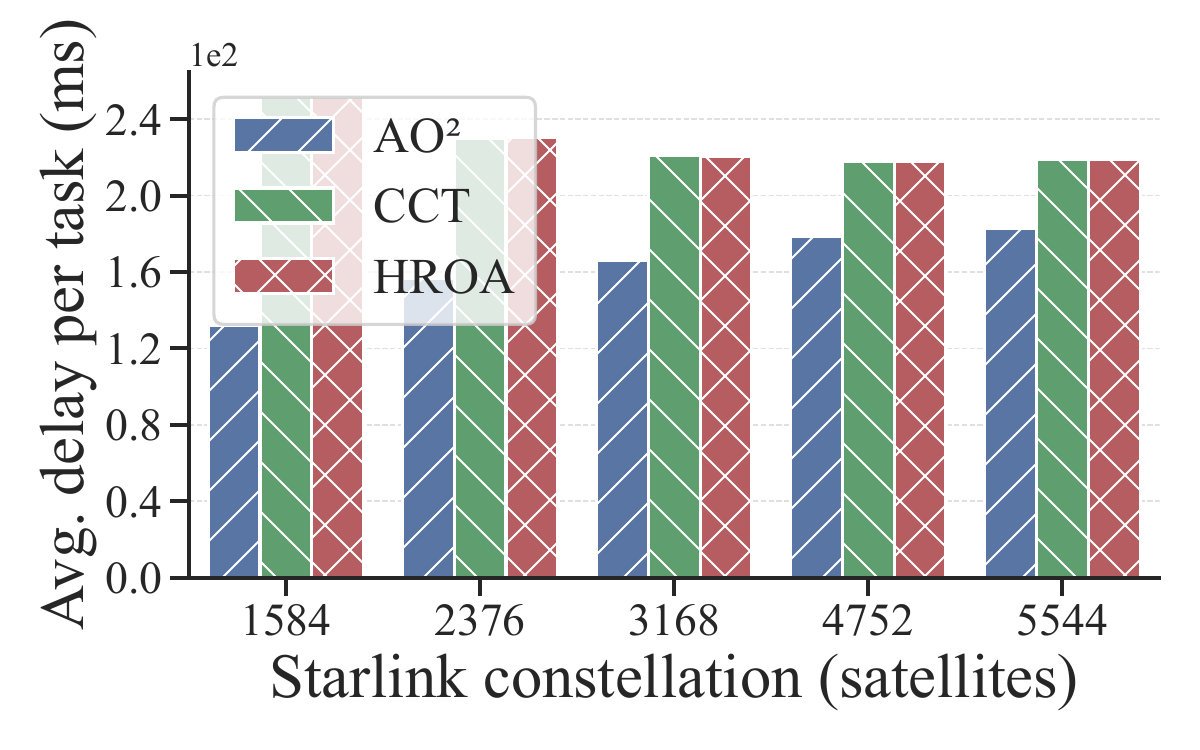} 
            \caption{Average delay grouped bars.}            
            \label{fig:average_delay_grouped_bar}
        \end{minipage}
        \begin{minipage}[t]{0.49\linewidth}
            \centering
            \includegraphics[width=\linewidth]{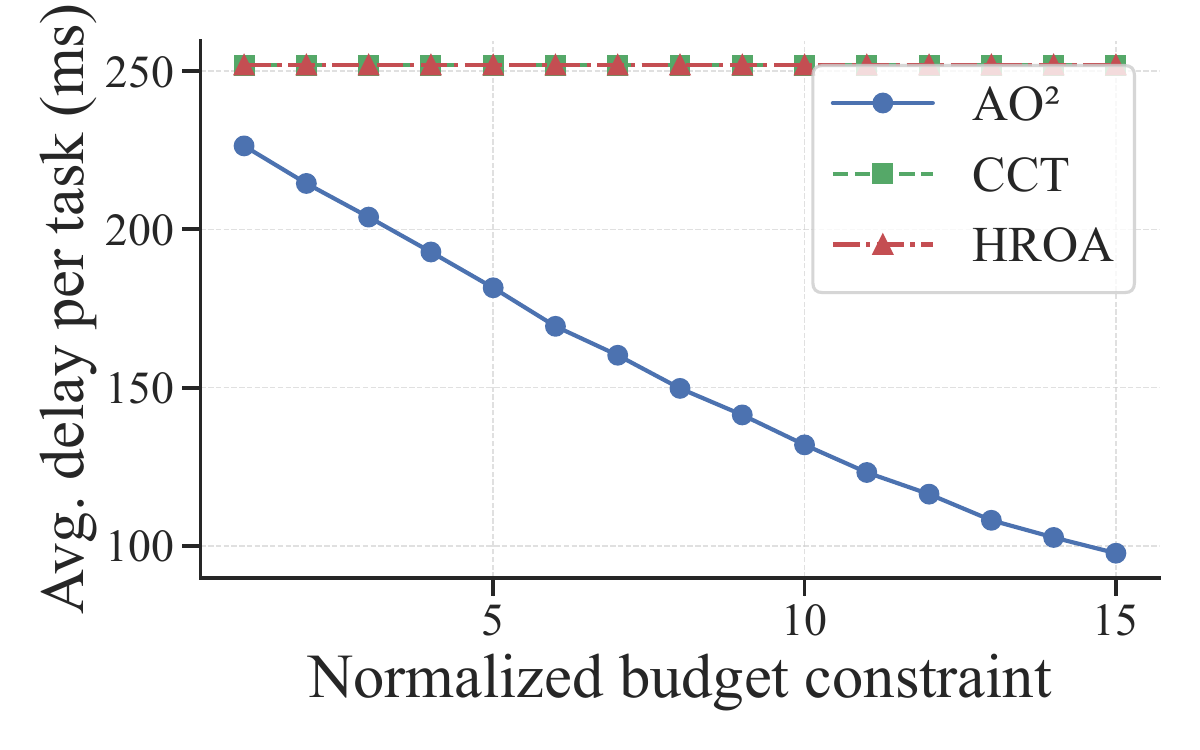} 
            \caption{Average delay in different budget constraints.} 
       \label{fig:avg_delay_in_different_budget_constraints}
        \end{minipage}
    } 
\end{figure}

\begin{table}[t]
\centering
\resizebox{\columnwidth}{!}{
\begin{tabular}{|c|c|c|c|c|c|}
\hline
\diagbox[width=3.6cm, height=1cm]{Runtime}{Starlink satellites} & 1584 & 2376 & 3168 & 4752 & 5544 \\
\hline
Avg. runtime per interval (s) & 10.87 & 14.95 & 17.87 & 26.09 & 30.66 \\
\hline
Std. runtime per interval (s) & 1.62 & 2.73 & 1.55 & 1.12 & 1.24 \\
\hline
Avg. runtime per task (ms) & 10.75 & 9.91 & 8.87 & 8.65 & 8.68 \\
\hline
Std. runtime per task (ms) & 1.64 & 1.85 & 0.73 & 0.34 & 0.30 \\
\hline
\end{tabular}
}
\caption{Algorithm runtime across network sizes.}
\label{tab:our_algorithm_runtime_network_sizes}
\vspace{-6mm}
\end{table}

Fig. \ref{fig:average_delay_grouped_bar} shows the average delay per task under different Starlink constellation sizes. Our solution \schemeAO{} consistently achieves the lowest average delay per task compared to the other schemes under different Starlink constellation sizes. This is primarily because data are offloaded to ground stations and subsequently routed through terrestrial infrastructure, where network resources are more abundant and routing paths are more stable than in LEO satellite networks.

Fig. \ref{fig:avg_delay_in_different_budget_constraints} shows the average delay in different budget constraints, where our solution has the lowest average delay per task across budget constraints. This is because our solution incorporates QoS utility into the marginal utility gain formulation, which efficiently uses the limited budget to offload tasks. This suggests that our solution maintains robust QoS performance as the constellation scales and budget constraint changes, highlighting its scalability and effectiveness in evolving LEO networks and changing budget constraints.

\subsubsection{Algorithm Runtime Performance}
TABLE \ref{tab:our_algorithm_runtime_network_sizes} reports the average per-interval and per-task runtime of our algorithm under varying Starlink constellation sizes. The per-interval runtime grows approximately linearly with constellation scale, which aligns perfectly with the theoretical conclusion in Theorem \ref{thm2}, driven by the increased task volume from more LEO satellites in larger constellations. This linear growth characteristic demonstrates the strong scalability of our proposed algorithm for large constellations.

\noindent \textbf{Runtime per interval optimization.} The current implementation of our proposed algorithm computes the average algorithm runtime per interval using a sequential task-processing strategy. Due to the inherent independence among tasks within each interval, the algorithm can be further optimized through parallelization. Specifically, the total budget can be partitioned into $n$ disjoint subsets to create $n$ pipelines. At the end, the remaining budget from each subset can be aggregated to form a new budget set to process the remaining tasks. This parallel approach can reduce the average algorithm runtime by approximately a factor of $n$.

Moreover, the average algorithm runtime per task decreases as the constellation size grows. In smaller constellations, the average algorithm runtime per task appears larger because the runtime is at the millisecond scale, where small fluctuations have a stronger visual impact. As more LEO satellites are deployed in larger constellations, the variation of the algorithm runtime per task becomes smaller, approaching a more stable average value. We expect the average algorithm runtime per task to converge as the constellation size continues to grow.
\section{Related Work}\label{sec:related_work}

Our work relates to two main directions: (i) in-orbit processing that pushes computation and service functionality into satellites and constellations, and (ii) satellite offloading and geo-distributed scheduling under dynamic constraints.

\noindent \textbf{In-orbit processing.}  
Recent systems research explores transforming satellites into computing platforms. Empirical studies characterize the feasibility and constraints of running commodity workloads in space \cite{xing2024cots, li_processing_2022}. Other efforts redesign network functionality for orbital environments, e.g., decoupling control-plane state to tolerate mobility and intermittent connectivity \cite{li2022spacecore}, or introducing constellation-scale in-orbit caching for CDN-like delivery \cite{zheng2025starcdn}. Application-level systems such as Serval further demonstrate the benefits of jointly leveraging satellite and ground compute for Earth-observation analytics \cite{tao2024serval}.

These works primarily aim to enhance orbit-native functionality. In contrast, we do not push more services into space; instead, we study satellite-grid co-design, opportunistically shifting workloads to renewable-adjacent ground compute under contact and energy constraints.

\noindent \textbf{Offloading and scheduling.}  
Prior work studies data transfer and task scheduling over intermittent satellite contacts. L2D2 and Umbra model orbital dynamics and heterogeneous bottlenecks to reduce LEO data latency through scheduling and rate adaptation \cite{vasisht2021l2d2, vasisht2023umbra}. Satellite edge computing research further considers delay-aware task offloading under satellite energy and computation limits \cite{chen2022delay, chou2025commercial}.

% Beyond satellite systems, geo-distributed computing literature optimizes placement and routing across datacenters under latency and bandwidth cost trade-offs \cite{wendell2010donar, uluyol2020nsdi, jain2023skyplane}.

Beyond satellite systems, adjacent aerial and spacecraft systems have explored service coordination and autonomous decision-making under dynamic topologies and limited onboard resources \cite{11300881, 11169619, 9767753}. 

Different from prior work, we jointly couple (i) contact-window feasibility, (ii) time-varying electricity budgets driven by renewable curtailment and price dynamics, and (iii) application QoS into a single online orchestration problem, explicitly linking offloading decisions to both task performance and satellite battery sustainability.

\section{Conclusion}\label{sec:conclusion}
\label{conclusion} 
We present SQSO, a renewable-aware task offloading framework for satellite-integrated smart grid systems, paired with a corresponding adaptive offloading orchestration algorithm. By jointly optimizing QoS, battery degradation, contact-window feasibility, and electricity budgets, SQSO bridges satellite networking and terrestrial energy systems via a unified orchestration layer. Our algorithm schedules offloading decisions based on marginal utility gain, consistently reducing on-board energy consumption, battery degradation, and communication delay across diverse constellation sizes and budget constraints.

\section*{Acknowledgements}
\label{acknowledgements}
This research was generously supported in part by the Carbon Neutrality and Energy System Transformation (CNEST) Program, Tsinghua University, and the National Key R\&D Program Project (grant number 2024YFE0203700).

\bibliographystyle{IEEEtran} % We choose the "plain" reference style
\bibliography{sections/reference} % Entries are in the refs.bib file

\end{document}